\newtheorem{theorem}{Theorem}
\newtheorem{lemma}[theorem]{Lemma}
\newtheorem{remark}[theorem]{Remark}
\newtheorem{corollary}[theorem]{Corollary}
\newcommand{\ER}{\ensuremath{\exists\mathbb{R}}\xspace}
\newcommand{\R}{\mathbb{R}}
\crefname{theorem}{Theorem}{Theorems}
\crefname{figure}{Figure}{Figures}
\title{On the Complexity of Recognizing Nerves of Convex Sets}
\author[1]{Patrick Schnider}
\author[1]{Simon Weber}
\affil[1]{Department of Computer Science, ETH Zurich}
\date{}
\begin{document}

\maketitle

\begin{abstract}
 We study the problem of recognizing whether a given abstract simplicial complex $K$ is the $k$-skeleton of the nerve of $j$-dimensional convex sets in $\R^d$. We denote this problem by $R(k,j,d)$. As a main contribution, we unify the results of many previous works under this framework and show that many of these works in fact imply stronger results than explicitly stated. This allows us to settle the complexity status of $R(1,j,d)$, which is equivalent to the problem of recognizing intersection graphs of $j$-dimensional convex sets in $\R^d$, for any $j$ and $d$. Furthermore, we point out some trivial cases of $R(k,j,d)$, and demonstrate that $R(k,j,d)$ is \ER-complete for $j\in\{d-1,d\}$ and $k\geq d$.
\end{abstract}

\section{Introduction}
Let $G=(V,E)$ be a graph. We say that $G$ is an \emph{intersection graph} of convex sets in $\mathbb{R}^d$ if there is a family $\mathcal{F}$ of convex sets in $\mathbb{R}^d$ and a bijection $V\rightarrow \mathcal{F}$ mapping each vertex $v_i$ to a set $s_i$ with the property that the sets $s_i$ and $s_j$ intersect if and only if the corresponding vertices $v_i$ and $v_j$ are connected in $G$, that is, $\{v_i,v_j\}\in E$. Such graphs are instances of \emph{geometric intersection graphs}, whose study is a core theme of discrete and computational geometry. Historically, intersection graphs have mainly been considered for convex sets in~$\mathbb{R}^1$, in which case they are called \emph{interval graphs}, or for convex sets or segments in the plane.

A fundamental computational question for geometric intersection graphs is the \emph{recognition problem} defined as follows: given a graph $G$, and some (infinite) collection of geometric objects $C$, decide whether $G$ is an intersection graph of objects of $C$. While the recognition problem for interval graphs can be solved in linear time \cite{Interval}, the recognition of segment intersection graphs in the plane is significantly harder. In fact, Matou\v{s}ek and Kratochv\'{i}l have shown that this problem is complete for the complexity class $\exists\mathbb{R}$ \cite{Kratochvil}. Their proof was later simplified by Schaefer \cite{Schaefer}, see also the streamlined presentation by Matou\v{s}ek \cite{Matousek}.

The complexity class $\exists\mathbb{R}$ was introduced by Schaefer and \v{S}tefankovi\v{c} \cite{ETR}. It can be thought of as an analogue of NP over the reals. More formally, the class is defined via a canonical problem called ETR, short for \emph{Existential Theory of the Reals}. The problem ETR is a decision problem whose input consists of an integer $n$ and a sentence of the form
\[\exists X_1,\ldots,X_n\in\mathbb{R}:\varphi(X_1,\ldots,X_n),\]
where $\varphi$ is a quantifier-free formula consisting only of polynomial equations and inequalities connected by logical connectives. The decision problem is to decide whether there exists an assignment of real values to the variables $X_1,\ldots,X_n$ such that the formula $\varphi$ is true.

It is known that $\mathsf{NP}\subseteq\exists\mathbb{R}\subseteq \mathsf{PSPACE}$, where both inclusions are conjectured to be strict. Many problems in computational geometry have been shown to be $\exists\mathbb{R}$-complete, such as the realizability of abstract order types \cite{Mnev}, the art gallery problem \cite{ArtGallery}, the computation of rectilinear crossing numbers \cite{Bienstock}, geometric embeddings of simplicial complexes \cite{Embeddings}, and the recognition of several types of geometric intersection graphs \cite{Cardinal,Evans,Kang,McDiarmid}.

In this work, we extend the recognition problem of intersection graphs of convex sets to the recognition problem of skeletons of nerves of convex sets. Let us introduce the relevant notions. An (abstract) \emph{simplicial complex} on a finite ground set $V$ is a family of subsets of~$V$, called \emph{faces}, that is closed under taking subsets. The \emph{dimension} of a face is the number of its elements minus one. The dimension of a simplicial is the maximum dimension of any of its faces. In particular, a 1-dimensional simplicial complex is just a graph.
The \emph{$k$-skeleton} of a simplicial complex $K$ is the subcomplex of all faces of dimension at most $k$.
Let $\mathcal{F}$ be a family of convex sets in $\mathbb{R}^d$. The \emph{nerve} of $\mathcal{F}$, denoted by $N(\mathcal{F})$ is the simplicial complex with ground set $\mathcal{F}$ where $\{F_1,\ldots,F_m\}\subset\mathcal{F}$ is a face whenever $F_1\cap\ldots\cap F_m\neq\emptyset$.
In other words, the intersection graph of a family $\mathcal{F}$ of convex sets is the $1$-skeleton of the nerve~$N(\mathcal{F})$. Consider now the following decision problem, which we denote by $R(k,j,d)$: given a simplicial complex $K$ by its maximal faces,
decide whether there exists a family $\mathcal{F}$ of $j$-dimensional convex sets in $\mathbb{R}^d$ such that $K$ is the $k$-skeleton of $N(\mathcal{F})$.

In some cases, the $k$-skeleton of a nerve of convex sets uniquely determines the entire nerve: recall \emph{Helly's theorem}~\cite{Helly} which states that for a finite family $\mathcal{F}$ of convex sets, if every $d+1$ of its members have a common intersection, then all sets in $\mathcal{F}$ have a common intersection. Phrased in the language of nerves, this says that if the $d$-skeleton of the nerve $N(\mathcal{F})$ is complete, then $N(\mathcal{F})$ is an $|\mathcal{F}|$-simplex. In other words, we can retrieve the nerve of a family of convex sets in $\mathbb{R}^d$ from its $d$-skeleton by filling in higher-dimensional faces whenever all of their $d$-dimensional faces are present. 

\begin{remark}
The following Helly-type theorem 
implies the analogous statement that a nerve of $j$-dimensional convex sets can be retrieved from its $(j+1)$-skeleton.
\end{remark}

\begin{theorem}\label{thm:Helly-type}
Let $\mathcal{F}$ be a finite family of $j$-dimensional convex sets in $\mathbb{R}^d$. Assume that any $j+2$ or fewer members of $\mathcal{F}$ have a common intersection. Then all sets in $\mathcal{F}$ have a common intersection.
\end{theorem}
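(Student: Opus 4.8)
The plan is to follow the classical proof of Helly's theorem via Radon's theorem, but to insert one extra observation that exploits the low dimension of the sets so that Radon is applied in dimension $j$ rather than in the ambient dimension $d$; this is exactly what should produce the Helly number $j+2$ instead of $d+1$. I would argue by induction on $n=|\mathcal{F}|$, writing $\mathcal{F}=\{F_1,\dots,F_n\}$. The base case $n\le j+2$ is immediate, since then the whole family is itself a subfamily of size at most $j+2$ and the hypothesis directly gives a common point. So I would assume $n\ge j+3$ and that the statement already holds for every family of fewer than $n$ sets.

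For the inductive step, I would first produce witness points. For each index $i$, the subfamily $\{F_l : l\neq i\}$ has $n-1\ge j+2$ members, all $j$-dimensional, and still satisfies the hypothesis; by the induction hypothesis I obtain a point $p_i\in\bigcap_{l\neq i}F_l$. The key structural fact recorded here is that for every fixed $l$ the set $F_l$ contains all witness points except possibly $p_l$, i.e. $\{p_i: i\neq l\}\subseteq F_l$.

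Next I would show that all witness points lie in a common affine flat of dimension at most $j$. Choosing any $j+2$ of them, say $p_{i_1},\dots,p_{i_{j+2}}$, there is (because $n\ge j+3$) some index $m\notin\{i_1,\dots,i_{j+2}\}$, and then all of these $j+2$ points lie in $F_m$, hence in $\operatorname{aff}(F_m)$, which has dimension $j$. Thus every $j+2$ of the points are affinely dependent, forcing $\dim\operatorname{aff}(p_1,\dots,p_n)\le j$. Inside this flat of dimension $D\le j$ I now have $n\ge j+3\ge D+2$ points, so Radon's theorem yields a partition of the index set into $I\sqcup J$ and a point $q\in\operatorname{conv}\{p_i: i\in I\}\cap\operatorname{conv}\{p_i: i\in J\}$. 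For an arbitrary $l$, it lies in exactly one part, say $l\in I$; then every $p_i$ with $i\in J$ has $i\neq l$, so $\{p_i: i\in J\}\subseteq F_l$, and by convexity $q\in\operatorname{conv}\{p_i: i\in J\}\subseteq F_l$. As this holds for all $l$, the point $q$ lies in $\bigcap_l F_l$, completing the induction.

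The hard part will be the flatness observation rather than any computation: recognizing that the Radon witnesses are confined to a $j$-dimensional flat—precisely because each set is $j$-dimensional and contains all but one witness—is what allows replacing the ambient dimension $d$ by $j$ and yields the bound $j+2$. The only point requiring mild care is that the witness points may be in degenerate or coincident position, but this causes no trouble since Radon's theorem applies to any collection of at least $D+2$ points in a $D$-dimensional flat; the remainder is the standard Helly-from-Radon bookkeeping.
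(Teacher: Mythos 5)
Your proof is correct, but it takes a genuinely different route from the paper. The paper's argument is topological: it invokes the nerve theorem to identify the homology of each sub-nerve $N(\mathcal{F}')$ with that of the union $\bigcup\mathcal{F}'$, notes that this union is at most $j$-dimensional so its homology vanishes above degree $j$, and combines this with the completeness of the $(j+1)$-skeleton to force $N(\mathcal{F})$ to be a simplex. You instead run the classical Radon-based induction for Helly's theorem, with the one new ingredient being the observation that the witness points $p_i\in\bigcap_{l\neq i}F_l$ are confined to a flat of dimension at most $j$ (since any $j+2$ of them lie in some $\operatorname{aff}(F_m)$ once $n\geq j+3$), which lets you apply Radon in dimension $j$ rather than $d$ and thereby replaces the Helly number $d+1$ by $j+2$. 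All the steps check out: the induction is well-founded, the passage from ``every $j+2$ witnesses are affinely dependent'' to ``all witnesses span a flat of dimension at most $j$'' is valid, and Radon applies to the full (possibly degenerate or repeated) point set since $n\geq j+3\geq D+2$. Your approach buys elementariness and self-containment --- no algebraic topology, no appeal to the nerve theorem (and hence no need to worry about which regularity hypotheses on the convex sets that theorem requires) --- at the cost of being a more hands-on computation; the paper's proof is shorter modulo the imported machinery and makes transparent \emph{why} the bound is $j+2$ (it is exactly the threshold at which the skeleton is rich enough to kill all homology that the $j$-dimensional union could possibly support). Either proof would be acceptable here.
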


This result is likely known, however we could not find a reference for it, so we include a short proof. The proof requires some algebraic topology, in particular the notion of \emph{homology}. For background on this, we refer to the many textbooks on algebraic topology, for instance the excellent work by Hatcher \cite{Hatcher}. For readers not familiar with this concept, the idea of the proof can still be seen by the intuitive notion that $H_k(X)=0$ means that the space $X$ has no holes of dimension $k$.

\begin{proof}
We want to show that the nerve $N(\mathcal{F})$ is an $|\mathcal{F}|$-simplex. Consider a subfamily $\mathcal{F}'\subseteq\mathcal{F}$ and its induced sub-nerve $N(\mathcal{F}')$. By the nerve theorem (see e.g.\ \cite{Hatcher}, Corollary 4G.3), the sub-nerve $N(\mathcal{F}')$ is homotopy-equivalent to the union $\bigcup\mathcal{F}'$ of the sets in $\mathcal{F}'$, implying that the two objects have isomorphic homology groups. As $\bigcup\mathcal{F}'$ has dimension at most $j$, and $\mathcal{F}$ (and thus also $\mathcal{F}'$) is finite, we have that $H_k(\bigcup\mathcal{F}')=0$ for all $k\geq j+1$. Thus $H_k(N(\mathcal{F}'))=0$ for all $k\geq j+1$ and all $\mathcal{F}'\subseteq\mathcal{F}$. On the other hand, the assumption that any $j+2$ or fewer sets have a common intersection implies that the $(j+1)$-skeleton of $N(\mathcal{F})$ is complete and thus $H_k(N(\mathcal{F}'))=0$ for all $1\leq k\leq j$ and \emph{all} subfamilies $\mathcal{F}'\subseteq\mathcal{F}$. Thus, $N(\mathcal{F})$ must be a simplex.
\end{proof}

\section{Containment results}

We start by showing that all considered problems are in the complexity class $\exists\mathbb{R}$.

\begin{theorem}
For all $k,j$ and $d$, we have $R(k,j,d)\in\exists\mathbb{R}$.
\end{theorem}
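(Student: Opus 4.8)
The plan is to reduce the realizability question to the satisfiability of an existential sentence over the reals of polynomial size, which places the problem in \ER. The central difficulty is that arbitrary $j$-dimensional convex sets cannot be encoded by finitely many real parameters, so my first step is to argue that whenever $K$ is realizable at all, it is realizable by convex \emph{polytopes} with only polynomially many vertices. To this end I would fix, for every face $\sigma$ of $K$ (equivalently, every subset of size at most $k+1$ that is a face), a witness point $p_\sigma$ in the common intersection of the corresponding sets, and replace each set $F_i$ by the convex hull $F_i' := \mathrm{conv}\{p_\sigma : i \in \sigma\}$. Since $F_i' \subseteq F_i$, shrinking the sets cannot create any new common intersection, so every required non-intersection is preserved; conversely $p_\sigma$ still lies in $\bigcap_{i \in \sigma} F_i'$, so every required intersection is preserved as well.

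This replacement need not keep the sets $j$-dimensional, so I would additionally place a bounded number of \emph{padding} vertices inside each original $F_i$, chosen so that the enlarged hull $F_i''$ becomes exactly $j$-dimensional. As $F_i'' \subseteq F_i$, this again only shrinks intersections relative to the original family and therefore preserves the entire $k$-skeleton. For fixed $k$ the number of faces of size at most $k+1$ is $O(n^{k+1})$, so each polytope has polynomially many vertices, and the whole realization is described by polynomially many real coordinates in $\R^d$.

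It then remains to write an \ER-sentence that existentially quantifies over all these vertex coordinates and asserts that the resulting polytopes realize $K$. There are three kinds of constraints. For each face $\sigma$ I assert that $\bigcap_{i \in \sigma} F_i'' \neq \emptyset$, a purely existential statement: there is a point together with convex-combination coefficients expressing it in each hull. For each set I assert that its vertices have affine dimension exactly $j$, expressed by rank conditions on the coordinate differences, namely that all $(j+2)$-subsets of its vertices are affinely dependent while some $(j+1)$-subset is affinely independent; nonvanishing of the relevant determinant is encoded by an auxiliary variable whose product with that determinant equals $1$. The delicate constraints are the required non-intersections: for each non-face $\sigma$ of size at most $k+1$ I must assert $\bigcap_{i \in \sigma} F_i'' = \emptyset$, which is naturally a \emph{universal} statement and hence not directly expressible in \ER. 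I would eliminate the universal quantifier by LP duality: treating the quantified vertices as coefficients, emptiness of the intersection is, by Farkas' lemma, equivalent to the existence of a dual certificate of infeasibility, and I would existentially quantify these dual multipliers as well. The resulting conditions are polynomial (bilinear in the vertex and multiplier variables) and thus admissible in \ETR.

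The main obstacle is precisely this last point: handling the non-intersection constraints, which are universal in nature, within the purely existential language of \ER. Verifying that the Farkas certificate correctly captures emptiness even as the vertices themselves are variables is the crux of the argument. A secondary point requiring care is the correctness of the polytope reduction, namely that shrinking to hulls of witness points and adding padding vertices reproduces exactly the prescribed $k$-skeleton and the prescribed dimension $j$.
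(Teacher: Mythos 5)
Your proposal is correct, and it shares its first half with the paper's proof: both arguments rest on the observation that a realization, if one exists, can be shrunk to convex hulls of polynomially many witness points (the paper uses one point per maximal face, you use one per face of size at most $k+1$; both counts are polynomial for fixed $k$), so that the whole realization is described by polynomially many reals. Where you diverge is in how membership in \ER is then concluded. The paper invokes the characterization of \ER via a real-valued certificate checked by a polynomial-time real RAM verifier, so it never has to express anything as a formula: the verifier simply \emph{computes} the relevant intersections and dimensions (using linear programming on the real RAM), and the universally-flavored non-intersection conditions cause no difficulty. You instead build an explicit \ETR sentence, which forces you to confront those emptiness conditions head-on; your resolution via Farkas' lemma --- existentially quantifying a dual infeasibility certificate whose defining conditions are bilinear in the vertex coordinates and the multipliers --- is sound and standard, and it works uniformly in the vertex variables since Farkas certificates exist over the reals for every coefficient matrix. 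The trade-off is that the paper's route is shorter but leans on a black-box characterization of \ER, while yours is self-contained modulo the definition of \ETR. One point where you are actually more careful than the paper: your padding vertices guarantee that the shrunken hulls remain exactly $j$-dimensional, a detail the paper's certificate (hulls of witness points of maximal faces only, which can degenerate, e.g., for an isolated vertex) leaves implicit even though its verifier explicitly tests $j$-dimensionality.
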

\begin{proof}
    Similarly to $\mathsf{NP}$, containment in \ER can be proven by providing a certificate consisting of a polynomial number of real values, and a verification algorithm running on the real RAM computation model which verifies these certificates~\cite{Erickson}. As a certificate, we use the coordinates of some point in $\R^d$ for each maximal face of the input complex $K$. These points then describe a family $\mathcal{F}$ of convex sets: Each set $F$ is the convex hull of all points representing maximal faces $S$ of $K$ such that $F\in S$.
    
    Note that if $K$ is the $k$-skeleton of $N(\mathcal{F})$ for some family $\mathcal{F}$ of $j$-dimensional convex sets in $\R^d$, such a certificate must exist: The points can be placed in the maximal intersections of $\mathcal{F}$, and shrinking each set to the convex hull of these points cannot change $N(\mathcal{F})$.

    Such a certificate can be verified by testing that each set $F$ is $j$-dimensional (e.g., using linear programming), and by testing that the $k$-skeleton of $N(\mathcal{F})$ is indeed $K$. The latter can be achieved in polynomial time by computing the intersection of each subfamily $\mathcal{F}'\subseteq \mathcal{F}$ of at most $\min(k+1,d+1)$ sets. If $k\leq d$, this determines the $k$-skeleton of $N(\mathcal{F})$. If $k>d$, the $k$-skeleton of $N(\mathcal{F})$ is determined by the $d$-skeleton of $N(\mathcal{F})$ by Helly's theorem~\cite{Helly}.
\end{proof}

\begin{lemma}
    $R(k,1,1)$ is in $\mathsf{P}$ for any $k\geq 1$.
\end{lemma}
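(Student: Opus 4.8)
The plan is to reduce $R(k,1,1)$ to a question about interval graphs and their clique complexes. A $1$-dimensional convex set in $\R^1$ is an interval of positive length, and since any interval graph is realizable by such intervals (shrink points to tiny intervals), the intersection graphs of families $\mathcal{F}$ of these sets are exactly the interval graphs. By Helly's theorem in dimension $1$ \cite{Helly}, a subfamily of intervals has a common point if and only if its members pairwise intersect; hence $N(\mathcal{F})$ is precisely the clique complex of the intersection graph $G$ of $\mathcal{F}$. Since $k\geq 1$, the $1$-skeleton of $K$ must coincide with the $1$-skeleton of $N(\mathcal{F})$, which is $G$ itself; thus $G$ is forced to be the $1$-skeleton $K^{(1)}$ of the input. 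Consequently, $K$ is a yes-instance if and only if $G:=K^{(1)}$ is an interval graph and $K$ equals the $k$-skeleton of the clique complex of $G$, i.e.\ $K$ consists of exactly the cliques of $G$ of size at most $k+1$.

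This yields the following algorithm. First I would read off $G=K^{(1)}$ (its edges are the pairs contained in some maximal face of $K$) and reject unless every maximal face has size at most $k+1$, so that $\dim K\leq k$. Next I would test in linear time whether $G$ is an interval graph \cite{Interval}; if not, reject. It remains to verify that $K$ is exactly the $k$-skeleton of the clique complex of $G$. One inclusion is automatic: since $K$ is downward closed, every face of $K$ is a clique of $G$, and by $\dim K\leq k$ it has size at most $k+1$. For the reverse inclusion I must check that every clique of $G$ of size at most $k+1$ is a face of $K$.

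The main obstacle is that $G$ may have exponentially many cliques, so this last check cannot be done by naive enumeration. Here I would exploit the fact that interval graphs are chordal, and therefore have at most $|V|$ maximal cliques, which can be listed in polynomial time via a perfect elimination ordering. Every clique of size at most $k+1$ is contained in some maximal clique $M$, so it suffices to verify, for each maximal clique $M$: if $|M|\leq k+1$, that $M$ itself is a face of $K$ (equivalently, $M$ is contained in some maximal face); and if $|M|>k+1$, that every $(k+1)$-subset of $M$ is a face of $K$. In the latter case, because $\dim K\leq k$, such a $(k+1)$-subset can only appear as a maximal face of $K$; so rather than enumerating subsets I would count the maximal faces of $K$ of size $k+1$ that are contained in $M$ and compare this count against $\binom{|M|}{k+1}$, rejecting if it is smaller. (If $\binom{|M|}{k+1}$ already exceeds the number of listed maximal faces, the instance is immediately a no-instance.) Since any clique of size at most $k+1$ inside $M$ extends to a $(k+1)$-subset of $M$ and $K$ is downward closed, these per-maximal-clique conditions are together equivalent to the required inclusion, and each runs in polynomial time; hence the whole procedure is polynomial.
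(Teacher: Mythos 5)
Your proposal is correct and follows essentially the same route as the paper: Helly's theorem in $\R^1$ forces $N(\mathcal{F})$ to be the clique complex of the interval graph $G=K^{(1)}$, so the problem reduces to interval-graph recognition plus checking that $K$ is exactly the $k$-skeleton of that clique complex. You are in fact more careful than the paper on the one nontrivial algorithmic point: the paper merely asserts that verifying every small clique of $G$ appears in $K$ can be done in polynomial time, whereas your enumeration of the linearly many maximal cliques of the chordal graph $G$, together with the counting comparison against $\binom{|M|}{k+1}$, actually justifies that claim.
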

\begin{proof}
    $R(1,1,1)$ is equivalent to recognizing interval graphs, and can thus be solved in polynomial time (see~\cite{Interval}). Since we are considering a family $\mathcal{F}$ of intervals in $\R^1$, the $1$-skeleton of $N(\mathcal{F})$ uniquely determines $N(\mathcal{F})$. By Helly's theorem, $N(\mathcal{F})$ must be the clique complex of its $1$-skeleton. Thus, $R(k,1,1)$ can be solved as follows: Build the graph $G$ given by the $1$-skeleton of the input complex $K$. Test the following four properties: (i) $G$ is an interval graph, (ii) $K$ is at most $k$-dimensional, (iii) every maximal face of $K$ is a clique of~$G$, and (iv) every clique of size $<k$ in $G$ is contained in some maximal face of $K$. Return yes if the answer to all these tests is yes, otherwise return no. All tests can be performed in polynomial time, thus $R(k,1,1)\in\mathsf{P}$.
\end{proof}

For some constellations of $k,j,d$, any simplicial complex of dimension at most $k$ can be realized as the $k$-skeleton of the nerve of $j$-dimensional convex sets in $\R^d$. In this case we say that the problem $R(k,j,d)$ is \emph{trivial}. Evans et al. prove triviality for $R(1,2,3)$:
\begin{lemma}[\cite{Evans}]
    $R(1,2,3)$ is trivial.
\end{lemma}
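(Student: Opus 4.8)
The claim is that every graph $G=(V,E)$ arises as the $1$-skeleton of the nerve of some family of $2$-dimensional convex sets in $\R^3$; since the $1$-skeleton of a nerve is exactly the intersection graph, this amounts to showing that every graph is the intersection graph of planar convex sets in $\R^3$. The plan is to realize each vertex $v_i$ by a flat convex polygon $C_i$ contained in its own plane $\Pi_i\subset\R^3$, where the planes $\Pi_1,\dots,\Pi_n$ are chosen in general position (no two parallel, no three sharing a common line). The feature of $\R^3$ that I want to exploit is that two distinct planes $\Pi_i,\Pi_j$ meet in a full line $\ell_{ij}=\Pi_i\cap\Pi_j$, so choosing a point $m_{ij}$ on this line yields a point that lies in both planes \emph{for free}. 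I would therefore pick, for every edge $\{i,j\}\in E$, such a point $m_{ij}\in\ell_{ij}$, and set $C_i=\operatorname{conv}(\{m_{ij}:\{i,j\}\in E\}\cup P_i)$, where $P_i$ is a small set of ``private'' points placed only in $\Pi_i$, used to guarantee that $C_i$ is genuinely $2$-dimensional (needed for vertices of degree at most $2$ and for isolated vertices) and located away from every intersection line.

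One direction is immediate. For an edge $\{i,j\}$ the shared point $m_{ij}$ is a vertex of both $C_i$ and $C_j$, so $C_i\cap C_j\neq\emptyset$ and the edge is realized. Moreover, since $C_i\subseteq\Pi_i$ and $C_j\subseteq\Pi_j$ with $\Pi_i\neq\Pi_j$, any point of $C_i\cap C_j$ must lie on $\ell_{ij}$. Hence making a non-edge $\{i,j\}\notin E$ into a non-edge amounts to making the two intervals $C_i\cap\ell_{ij}$ and $C_j\cap\ell_{ij}$ disjoint on this common line. A convenient way to reason about all of these conditions at once is to realize each plane as a graph $\Pi_i=\{z=a_ix+b_iy+c_i\}$ and project everything vertically to the $xy$-plane: then $C_i$ projects to a planar convex set $\hat C_i$, the line $\ell_{ij}$ projects to a line $L_{ij}$, and $C_i\cap C_j\neq\emptyset$ holds if and only if $\hat C_i\cap\hat C_j\cap L_{ij}\neq\emptyset$. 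In particular, whenever the projections $\hat C_i$ and $\hat C_j$ are already disjoint, the non-edge is automatic.

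The main obstacle is precisely the disjointness for non-edges: a high-degree vertex $i$ forces $C_i$ to reach every $m_{ij}$ with $j\sim i$, so $C_i$ cannot be kept small, and one must ensure that this necessarily spread-out set still avoids the lines $\ell_{ik}$ of \emph{all} non-neighbors $k$. I would control this by using the remaining freedom, namely the position of each $m_{ij}$ along its line together with the choice of planes. Concretely, it suffices to find, for each $i$, a single cell $Q_i$ of the arrangement formed inside $\Pi_i$ by the non-neighbor lines $\{\ell_{ik}:k\not\sim i\}$ such that every neighbor line $\ell_{ij}$ (with $j\sim i$) passes through $Q_i$; placing all $m_{ij}$ and the private points $P_i$ inside $Q_i$ then confines $C_i$ to the convex cell $Q_i$, so $C_i$ meets no non-neighbor line and every non-edge at $i$ is killed. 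The two things left to verify are that the placements are consistent at both endpoints of each edge (the single point $m_{ij}$ must lie in $Q_i$ as seen in $\Pi_i$ and in $Q_j$ as seen in $\Pi_j$, i.e.\ in an overlap of two intervals on $\ell_{ij}$) and that such cells exist at all; I expect both to follow from choosing the planes generically and pushing the points far out along their lines in a coordinated direction, so that the finitely many non-neighbor lines eventually all lie on one side. Establishing this separation rigorously is the real content of the proof, whereas the edge direction and the $2$-dimensionality are routine.
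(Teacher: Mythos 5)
The framework you set up (one flat polygon per vertex in its own plane, edges realized by shared points on the intersection lines $\ell_{ij}$, non-edges reduced to disjointness of two intervals on $\ell_{ij}$) is sound, and you correctly locate where the difficulty sits. But the step you defer is not a technicality --- it is the entire content of the lemma, and the mechanism you sketch for it does not work as stated. You need, for each vertex $i$, a cell $Q_i$ of the arrangement of non-neighbor lines $\{\ell_{ik}:k\not\sim i\}$ in $\Pi_i$ that is met by \emph{every} neighbor line $\ell_{ij}$, $j\sim i$. For a \emph{generic} choice of planes this cell need not exist. If you go ``far out in a coordinated direction,'' you land in an unbounded cell of that arrangement; such a cell corresponds to an open arc between two consecutive non-neighbor directions on the circle of directions in $\Pi_i$, and a line reaches that cell (far out) only if its direction lies in that arc. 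So all neighbor lines reaching a common unbounded cell forces the $\deg(i)$ neighbor directions to be consecutive in the circular order of all $n-1$ directions $\{\ell_{ij}\}_{j\neq i}$ within $\Pi_i$ --- a strong combinatorial condition that a generic plane arrangement will violate as soon as a vertex has several neighbors and several non-neighbors interleaved. (Bounded cells do not rescue this: two neighbor lines always share the cell around their crossing point, but three or more generically do not share any cell.) On top of this, the single point $m_{ij}$ must simultaneously lie far out in the correct arc as seen in $\Pi_i$ \emph{and} in $\Pi_j$, which imposes an orientation-consistency condition along each edge that you do not address. What is actually needed is a specific, coordinated construction of the planes (and of the points on the lines) designed so that these separation conditions hold for all vertices at once; asserting that genericity plus ``pushing to infinity'' suffices leaves the lemma unproved.

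For comparison: the paper itself does not prove this statement at all --- it is quoted from Evans et al.\ \cite{Evans}, where the realizability of every graph as an intersection graph of $2$-dimensional convex sets in $\R^3$ is established by an explicit construction rather than by a genericity argument. Your write-up would become a proof only once you exhibit such an explicit family of planes and point placements and verify the separation property for an arbitrary graph; as it stands, the claim ``such cells exist at all'' is exactly the open step.
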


Furthermore, we can show that if the dimensions $j$ and $d$ get large enough compared to~$k$, the problem also becomes trivial.
\begin{lemma}
    $R(k,2k+1,2k+1)$ is trivial.
\end{lemma}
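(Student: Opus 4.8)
The plan is to show that any simplicial complex $K$ of dimension at most $k$ can be realized as the $k$-skeleton of the nerve of $(2k+1)$-dimensional convex sets in $\R^{2k+1}$. The key idea is to exploit the large ambient dimension to place the vertices in ``general position'' so that intersection patterns can be controlled independently.

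Let me sketch my approach. Suppose $K$ has vertex set $V = \{v_1, \ldots, v_n\}$ with maximal faces $S_1, \ldots, S_m$, each of size at most $k+1$. I want to construct, for each vertex $v_i$, a convex set $F_i$ in $\R^{2k+1}$ such that a subset $\sigma \subseteq V$ of size at most $k+1$ intersects (i.e.\ $\bigcap_{v_i \in \sigma} F_i \neq \emptyset$) if and only if $\sigma$ is a face of $K$. Since $K$ has dimension at most $k$, every face has at most $k+1$ vertices, so controlling intersections of up to $k+1$ sets suffices to determine the entire $k$-skeleton.

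The natural construction: for each maximal face $S_\ell$, introduce a ``witness point'' $p_\ell \in \R^{2k+1}$, and let $F_i = \operatorname{conv}\{p_\ell : v_i \in S_\ell\}$. First I would place the witness points in general position — specifically so that any $2k+2$ of them are affinely independent, which is possible in $\R^{2k+1}$. The crucial claim is then that for a subset $\sigma$ of size at most $k+1$, the sets $\{F_i : v_i \in \sigma\}$ have a common point if and only if the witness points they share realize it, and that general position prevents any ``accidental'' intersections coming from convex combinations of points belonging to different faces. The dimension $2k+1$ is chosen precisely so that the convex hulls involved in a would-be accidental intersection — a face on one side using up to $k+1$ points and a competing hull on the other using up to $k+1$ points — together span at most $2k+2$ points, which in general position form a simplex whose relevant faces meet only in the shared witnesses. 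This is essentially a Radon/Tverberg-type dimension count ensuring that disjoint point sets of total size $2k+2$ have disjoint convex hulls.

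The main obstacle I anticipate is making the ``no accidental intersection'' argument fully rigorous: I must verify that if $\sigma$ is \emph{not} a face of $K$, then $\bigcap_{v_i \in \sigma} F_i = \emptyset$. Each $F_i$ is a hull of witness points of the maximal faces containing $v_i$; if $\sigma$ is not a face, no single witness point is common to all of them, but a common point could a priori arise as a convex combination. Ruling this out is exactly where general position enters: I would show that a common point of $k+1$ hulls, none sharing a witness, would force an affine dependence among at most $2k+2$ witness points, contradicting general position. I would also need to check the easy direction (every face of $K$ is realized, which holds because each face is contained in some maximal face $S_\ell$ and thus all its sets contain $p_\ell$) and confirm each $F_i$ can be made genuinely $(2k+1)$-dimensional (by a tiny perturbation or by thickening, without disturbing the finite intersection pattern). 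Finally, I would invoke \Cref{thm:Helly-type} or the $k$-dimensionality of $K$ to argue that controlling intersections of at most $k+1$ sets indeed pins down the full $k$-skeleton.
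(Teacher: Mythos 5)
The paper disposes of this lemma in one line by citing Wegner's theorem (every $k$-dimensional simplicial complex is the nerve of convex sets in $\R^{2k+1}$), so your proposal is really an attempt to reprove that theorem from scratch. Unfortunately, the construction you describe has a genuine gap at exactly the step you flag as the main obstacle, and the gap is not repairable by the argument you sketch. The problem is that $F_i$ is the convex hull of one witness point per \emph{maximal face containing $v_i$}, and a vertex can lie in arbitrarily many maximal faces. So a point of $\bigcap_{v_i\in\sigma}F_i$ is, by Carath\'eodory, a convex combination of up to $2k+2$ witness points \emph{from each of the $k+1$ hulls}, i.e.\ up to $(k+1)(2k+2)$ witness points in total. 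No affine dependence among $2k+2$ or fewer points is forced, so general position gives you nothing. Your dimension count (``together span at most $2k+2$ points'') would only be valid if each $F_i$ were the hull of at most $k+1$ witnesses, which is false in general.

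The failure is concrete. Take $k=1$, $d=3$, and let $K$ contain two non-adjacent vertices $u$ and $w$ with $u$ incident to three edges and $w$ to two, sharing no edge. Then $F_u$ is the hull of three witness points and $F_w$ the hull of two others, five points in $\R^3$ in total. By Radon, five points in $\R^3$ always admit a partition into two parts with intersecting hulls, and for many general-position placements (e.g.\ five points on the moment curve, partitioned as $\{1,3,5\}$ versus $\{2,4\}$) that intersecting partition is precisely the one induced by $u$ and $w$. So $F_u\cap F_w\neq\emptyset$ even though $\{u,w\}\notin K$, and general position of the witness points cannot rule this out. Wegner's actual proof must therefore choose the sets far more carefully than ``hulls of witness points of maximal faces in general position'' (his construction works with a geometric realization of $K$ itself and barycenters of \emph{all} faces, with a more delicate disjointness argument). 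As written, your proof does not go through; the honest fix is either to carry out Wegner's construction in full or, as the paper does, simply to cite his theorem and observe that a $k$-dimensional complex equals the $k$-skeleton of its own nerve representation (after thickening the sets to be full-dimensional, which does not change the nerve).
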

\begin{proof}
 Wegner has shown that every $k$-dimensional simplicial complex is the nerve of convex sets in $\mathbb{R}^{2k+1}$ \cite{Wegner}. In particular, it is also the $k$-skeleton of a nerve.
\end{proof}

Finally, we prove the following lifting result.
\begin{lemma}
    If $R(k,j,d)$ is trivial, $R(k,j',d')$ is trivial for all $d'\geq d$ and $j\leq j'\leq d'$.
\end{lemma}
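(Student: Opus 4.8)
The plan is to start from the definition of triviality and push a realization up in both parameters. Let $K$ be an arbitrary simplicial complex of dimension at most $k$. Since $R(k,j,d)$ is trivial, $K$ is the $k$-skeleton of the nerve $N(\mathcal{F})$ of some family $\mathcal{F}=\{F_1,\dots,F_n\}$ of $j$-dimensional convex sets in $\R^d$. Viewing $\R^d$ as a coordinate subspace of $\R^{d'}$ (possible since $d'\ge d$), the sets $F_i$ become $j$-dimensional convex sets in $\R^{d'}$ with unchanged nerve. It then remains to enlarge each $F_i$ to a $j'$-dimensional convex set $F_i'$ while preserving the $k$-skeleton of the nerve; here I will use $j\le j'\le d'$.

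Before thickening, I would first arrange that all sets are compact. The issue is that two disjoint convex sets in $\R^{d'}$ can be at distance $0$ when unbounded, so an arbitrarily small enlargement could create a new intersection. To avoid this, I intersect every $F_i$ with a ball $\bar B(0,R)$ of large radius $R$. Choosing one witness point in $\bigcap_{i\in S}F_i$ for each of the finitely many faces $S$ of $K$ and taking $R$ larger than all their norms, clipping preserves every face (the witnesses survive) and creates no new face (clipping only shrinks the sets), so the clipped family still has $k$-skeleton $K$; for $R$ large enough each clipped set keeps a relatively open $j$-dimensional ball and hence remains $j$-dimensional. I may therefore assume each $F_i$ is a compact $j$-dimensional convex set.

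Next I would fix the enlargement budget. For each subset $S$ with $|S|\le k+1$ that is \emph{not} a face, the sets $\{F_i\}_{i\in S}$ are compact with empty common intersection, so a standard compactness argument yields $\epsilon_S>0$ with $\bigcap_{i\in S}(F_i\oplus\bar B(0,\epsilon_S))=\emptyset$. Let $\epsilon=\min_S\epsilon_S>0$, the minimum over the finitely many non-faces. Now I thicken: for each $i$, extend the $j$-dimensional affine hull of $F_i$ to a $j'$-dimensional affine subspace $A_i\subseteq\R^{d'}$ (possible since $j'\le d'$) and set $F_i'=F_i\oplus D_i$, where $D_i$ is a ball of radius $r<\epsilon$ inside the $(j'-j)$-dimensional subspace of $A_i$ orthogonal to $\mathrm{aff}(F_i)$. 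Then $F_i'$ is compact and convex, has dimension exactly $j+(j'-j)=j'$, and satisfies $F_i\subseteq F_i'\subseteq F_i\oplus\bar B(0,\epsilon)$.

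Finally I would verify that the $k$-skeleton is unchanged. Since $F_i'\supseteq F_i$, every face of $K$ survives, so no face is lost. For a non-face $S$ of size at most $k+1$, we get $\bigcap_{i\in S}F_i'\subseteq\bigcap_{i\in S}(F_i\oplus\bar B(0,\epsilon))=\emptyset$, so no spurious face of dimension at most $k$ appears. Hence $\{F_i'\}$ is a family of $j'$-dimensional convex sets in $\R^{d'}$ whose nerve has $k$-skeleton $K$, proving $R(k,j',d')$ trivial. The main obstacle is exactly the control exercised in the second and third steps: ensuring that enlarging the sets introduces no new low-dimensional intersections. This is why reducing to compact sets first is essential, since only then do disjoint subfamilies enjoy a genuine positive slack that a sufficiently small thickening cannot overcome.
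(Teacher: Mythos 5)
Your proof is correct and follows essentially the same strategy as the paper's: embed the realization into the larger ambient space and then thicken each set in fresh orthogonal directions by an amount small enough that no non-face of size at most $k+1$ acquires a common point. The paper does this one dimension at a time ($d\mapsto d+1$, then $j\mapsto j+1$) whereas you do both jumps at once, but that difference is cosmetic. The one substantive improvement in your write-up is the preliminary clipping to a large ball: the paper's argument takes for granted that disjoint intersections of convex sets are at positive distance, which can fail for unbounded sets, and your compactification step is exactly what is needed to make the ``positive slack $\epsilon$'' argument airtight.
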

\begin{proof}
    We prove that both $j$ and $d$ can be increased by one without destroying triviality, from which the lemma follows.
    
    Any simplicial complex that can be realized in dimension $d$ can also be realized in a $d$-dimensional subspace of $\R^{d+1}$, thus increasing $d$ by one preserves triviality.

    To see that $j$ can be increased, consider a realization of a simplicial complex as the $k$-skeleton of the nerve of a family $\mathcal{F}$ of $j$-dimensional convex sets in $\R^d$. Now, consider any two subfamilies $\mathcal{F}_1,\mathcal{F}_2$ of $\mathcal{F}$, such that $\left(\bigcap_{F\in \mathcal{F}_1}F\right) \cap \left(\bigcap_{F\in \mathcal{F}_2}F\right)=\emptyset$. The two intersections $\bigcap_{F\in \mathcal{F}_1}F$ and $\bigcap_{F\in \mathcal{F}_2}F$ must have some distance $\epsilon$. Consider $\epsilon_{min}$, the minimum of all such $\epsilon$ over all pairs of subfamilies $\mathcal{F}_1,\mathcal{F}_2$. We extrude every object in $\mathcal{F}$ in some direction not yet spanned by the object by some $\epsilon'$ small enough that no intersection $\bigcap_{F\in \mathcal{F}'}F$ for $\mathcal{F}'\subseteq \mathcal{F}$ grows by more than $\epsilon_{min}/3$. This process can not introduce any additional intersections, and thus the nerve of this family of $j+1$-dimensional sets is the same as the nerve of $\mathcal{F}$. We conclude that triviality of $R(k,j,d)$ for $j<d$ implies triviality of $R(k,j+1,d)$.    
\end{proof}

\section{ Existing \texorpdfstring{$\exists\mathbb{R}$}{ER}-Hardness Results}

\begin{lemma}\label[lemma]{lem:k1d}
$R(k,1,d)$ is \ER-hard for $k\geq 1$ and $d\geq 2$.
\end{lemma}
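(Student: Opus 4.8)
The overall plan is to separate two independent difficulties: raising the ambient dimension $d$, and raising the skeleton parameter $k$. The anchor is the case $k=1$, $d=2$, where $R(1,1,2)$ is exactly the recognition problem for intersection graphs of segments in the plane, which is \ER-hard by Kratochv\'il and Matou\v{s}ek \cite{Kratochvil}. I would first establish \ER-hardness of $R(1,1,d)$ for every $d\geq 2$, and then reduce $R(1,1,d)$ to $R(k,1,d)$ for every $k\geq 1$.

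For the reduction in $k$, I would exploit \cref{thm:Helly-type} with $j=1$: the nerve of a family of segments is completely determined by its $2$-skeleton, since any three segments with a common point force the whole subfamily to a common point. Given a graph $G$, viewed as a $1$-dimensional complex, map it to the instance $K:=G$ of $R(k,1,d)$. If $K$ is realizable, then because $K$ has no faces of dimension $\geq 2$, the realizing family has no three segments through a common point, so its $1$-skeleton is $G$ and $G\in R(1,1,d)$. Conversely, starting from any segment realization of $G$, I would argue a \emph{general position lemma}: one may perturb the realization so that no three segments pass through a common point while preserving all pairwise intersections and non-intersections. Indeed, a non-intersection and a transversal intersection are both stable under small perturbations, and a triple point can be resolved by sliding one of the three incident segments slightly; in $\R^d$ with $d\geq 3$ the two incidences to be preserved cut out a positive-codimension family within which a generic choice destroys the triple point, and in $\R^2$ transversality of the crossings suffices. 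After this perturbation the $k$-skeleton of the nerve is exactly $G=K$, so $K\in R(k,1,d)$, giving a reduction $R(1,1,d)\leq R(k,1,d)$ for all $k\geq 1$.

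The hard part is establishing \ER-hardness of $R(1,1,d)$ for $d\geq 3$, and here the obstacle is genuine: higher dimensions admit realizations that the plane forbids, so one cannot simply reuse a planar instance. For example, on a doubly-ruled quadric the two rulings realize the complete bipartite intersection pattern $K_{m,n}$ without being coplanar, so the natural idea of forcing all ``real'' segments into a common plane is delicate. The classical dichotomy that a family of pairwise-intersecting lines in $\R^3$ is either concurrent or coplanar suggests building a \emph{rigid coplanarity scaffold}: a clique of auxiliary segments (hence concurrent or coplanar) together with further segments meant to rule out concurrency, forcing a common plane $P$ into which every vertex-segment is then driven by being made incident to three scaffold segments meeting $P$ in distinct points. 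The crux, which I expect to be the main obstacle, is certifying that the scaffold cannot escape into a concurrent or otherwise non-planar configuration; since concurrency is always available to a clique, this cannot be achieved by incidence combinatorics alone and requires a carefully engineered gadget.

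As an alternative that sidesteps the coplanarity issue, I would consider a direct reduction from a dimension-robust source of \ER-hardness such as \stretcha or \mnev-style realizability of point--line incidence configurations, encoding incidences as forced segment intersections; such configurations are rigid enough that their realization space does not collapse when passing from the plane to $\R^d$, which is precisely the property the scaffold approach struggles to guarantee. Combining whichever of these yields hardness of $R(1,1,d)$ for all $d\geq 2$ with the $k$-lifting reduction above would complete the argument.
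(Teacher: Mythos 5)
Your proposal has two genuine gaps, and the first is fatal: you never actually establish \ER-hardness of $R(1,1,d)$ for $d\geq 3$. You correctly diagnose the obstacle (doubly ruled quadrics, the concurrent-versus-coplanar dichotomy for pairwise intersecting lines) and sketch two strategies, but you explicitly defer the ``carefully engineered gadget'' that would force coplanarity, and that gadget \emph{is} the content of the lemma in dimension $\geq 3$. The paper resolves this by citing the construction of Evans et al., who extend Schaefer's stretchability reduction to $\R^3$ precisely by arguing that all segments in their constructed graph must be coplanar; the paper's only original observation here is that this coplanarity argument is insensitive to the ambient dimension, so the same construction works for all $d>3$. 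Without that construction (or a worked-out substitute), your argument for $d\geq 3$ is a research plan, not a proof.

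The second gap is in your $k$-lifting. You reduce $R(1,1,d)$ to $R(k,1,d)$ via the identity map $G\mapsto K:=G$, which requires a general-position lemma: \emph{every} segment intersection graph admits a realization with no three segments concurrent. In the plane this is plausible and standard, but in $\R^d$ with $d\geq 3$ a transversal intersection of two segments is \emph{not} an open condition --- it is codimension one --- so your claim that incidences to be preserved ``cut out a positive-codimension family within which a generic choice destroys the triple point'' does not follow: a segment required to meet four or more pairwise disjoint segments can be pinned to finitely many lines (Schubert-type rigidity), and you have not excluded that some realizable graph forces a triple point in every realization. The paper sidesteps this entirely by making a weaker, instance-specific claim: for the \emph{particular} graphs produced by the stretchability reduction, every ``yes''-instance admits a drawing with no triple intersections, so the constructed complex (which has no faces of dimension $\geq 2$) is a yes-instance of $R(k,1,d)$ for every $k$, while any realization as a $k$-skeleton is in particular a segment representation of the graph and hence certifies stretchability. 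Restricting your perturbation claim to the reduction's output, rather than asserting it for all realizable graphs, is the repair you need.
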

\begin{proof}
    For $k=1$ and $d=2$, this is equivalent to recognizing segment intersection graphs in the plane, which Schaefer \cite{Schaefer} proved to be \ER-hard by reduction from stretchability. Evans et al. \cite{Evans} generalize Schaefer's proof for intersection graphs of segments in $\R^3$ ($k=1$ and $d=3$). Their proof works by arguing that all segments of their constructed graph must be coplanar. 
    Since the argument implies coplanarity no matter the dimension of the ambient space, the proof also implies \ER-hardness for $k=1$ and $d>3$. Furthermore, for any ``yes''-instance of stretchability, the constructed graph can be drawn using segments with no triple intersections. Thus, the proof implies \ER-hardness for $R(k,1,d)$ for $k>1$, as well.
\end{proof}

Schaefer~\cite{Schaefer} furthermore proved that $R(1,2,2)$ is \ER-hard. In the proof of this result, again no triple intersections occur in the representations of ``yes''-instances. Thus the same proof applies to the following lemma.
\begin{lemma}\label[lemma]{lem:k22}
    $R(k,2,2)$ is \ER-hard for any $k\geq 1$.
\end{lemma}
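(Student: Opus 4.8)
The plan is to reduce from \stretcha using exactly the construction Schaefer~\cite{Schaefer} employs for $R(1,2,2)$, and to exploit the stated fact that its ``yes''-instance representations contain no triple intersections. Schaefer's reduction produces, from each \stretcha instance, a graph $G$ that is realizable as the intersection graph of $2$-dimensional convex sets in $\R^2$ precisely when the instance is stretchable. I would feed the very same graph $G$, now regarded as a $1$-dimensional simplicial complex, as the input to $R(k,2,2)$, and argue that this is a ``yes''-instance of $R(k,2,2)$ if and only if the original \stretcha instance is stretchable.

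For the forward direction, I would start from a stretchable instance and take Schaefer's representation, a family $\mathcal{F}$ of $2$-dimensional convex sets with intersection graph $G$ and no three sets meeting simultaneously. The absence of triple intersections forces $N(\mathcal{F})$ to have no faces of dimension $\geq 2$, so $N(\mathcal{F})$ is $1$-dimensional and equals $G$. Since the $k$-skeleton of a $1$-dimensional complex is the complex itself for every $k\geq 1$, the complex $G$ is the $k$-skeleton of $N(\mathcal{F})$, as required.

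For the backward direction, I would assume $G$ is the $k$-skeleton of $N(\mathcal{F})$ for some family $\mathcal{F}$ of $2$-dimensional convex sets in $\R^2$. For any $k\geq 1$, the $1$-skeleton of the $k$-skeleton of $N(\mathcal{F})$ equals the $1$-skeleton of $N(\mathcal{F})$, which is the intersection graph of $\mathcal{F}$; as $G$ is $1$-dimensional, this intersection graph is exactly $G$. Thus $\mathcal{F}$ realizes $G$ as an intersection graph of planar convex sets, and Schaefer's correctness guarantee makes the \stretcha instance stretchable.

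The heavy lifting is all inherited from Schaefer's construction, which I would invoke as a black box, so there is no genuine obstacle to overcome. The single point needing care is confirming that the no-triple-intersection property of ``yes''-instance representations truly holds for the construction as stated; once it does, the $k$-skeleton collapses to the underlying intersection graph, and the reduction is correct for all $k\geq 1$ simultaneously.
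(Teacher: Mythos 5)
Your proof is correct and follows exactly the paper's approach: the paper likewise reuses Schaefer's reduction verbatim and observes that, since ``yes''-instance representations avoid triple intersections, the nerve is $1$-dimensional and the same graph works as an instance of $R(k,2,2)$ for every $k\geq 1$. You merely spell out the two directions in more detail than the paper's one-sentence justification.
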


This solves the complexity status of $R(1,j,d)$ for all $j$ and $d$. We summarize these results in the following corollary.
\begin{corollary}
For $k=1$, $R(k,j,d)$ is \begin{itemize}
    \item in $\mathsf{P}$, if $j=d=1$.
    \item \ER-complete, if $j=1$ and $d>2$, or if $j=d=2$.
    \item trivial in all other cases.
\end{itemize}
\end{corollary}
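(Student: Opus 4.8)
The plan is to treat this corollary purely as an assembly of results already established in the paper, organized as an exhaustive case analysis over the admissible pairs $(j,d)$ with $1 \le j \le d$ (the inequality $j\le d$ being forced, since a $j$-dimensional convex set cannot live in $\R^d$ when $j>d$), with $k$ fixed to $1$ throughout. First I would dispose of membership: the containment theorem gives $R(1,j,d)\in\ER$ for all $j,d$, so for every pair that is claimed \ER-complete it remains only to supply hardness, and the remaining pairs need only a triviality or polynomial-time argument.

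For the polynomial and hard regimes I would invoke the targeted lemmas directly. The single case $j=d=1$ is handled by specializing the lemma $R(k,1,1)\in\mathsf{P}$ to $k=1$. Along the line $j=1$, \cref{lem:k1d} furnishes \ER-hardness for every $d\ge 2$, which combined with containment yields \ER-completeness of $R(1,1,d)$ for all $d\ge 2$. For the remaining hard pair $j=d=2$, \cref{lem:k22} supplies \ER-hardness, and containment again completes the argument. This leaves exactly the pairs with $j\ge 2$ and $d\ge 3$.

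For these I would establish triviality from a single seed by lifting. Evans et al.\ give that $R(1,2,3)$ is trivial, and the lifting lemma propagates triviality of $R(k,j,d)$ to $R(k,j',d')$ for all $d'\ge d$ and $j\le j'\le d'$. Applying this with $(k,j,d)=(1,2,3)$ yields triviality of $R(1,j',d')$ for every $d'\ge 3$ and $2\le j'\le d'$, which is precisely the region $j\ge 2$, $d\ge 3$ that remains; the Wegner-based triviality of $R(1,3,3)$ is then subsumed, though it could equally serve as an alternative seed. To finish I would verify that the three regimes --- the pair $(1,1)$; the pairs with $j=1,d\ge 2$ together with $(2,2)$; and the pairs with $j\ge 2,d\ge 3$ --- are pairwise disjoint and cover every admissible $(j,d)$.

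Since the content is entirely bookkeeping, the only genuine subtlety --- and the step I would be most careful about --- is the boundary between the hard and trivial regimes. The pairs $(2,2)$ and $(2,3)$ are adjacent yet land on opposite sides, with $R(1,2,2)$ being \ER-complete while $R(1,2,3)$ is trivial. I would emphasize that this is not an artifact of the lifting argument: every available triviality seed requires $d\ge 3$, so no triviality is reachable at $d=2$, which is exactly consistent with the hardness of $R(1,2,2)$. Confirming exhaustiveness of the partition and this clean boundary is the whole of the work.
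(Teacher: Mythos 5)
Your proof is correct and assembles the corollary from exactly the ingredients the paper itself relies on --- the \ER-containment theorem, the interval-graph lemma for $j=d=1$, \Cref{lem:k1d,lem:k22} for hardness, and the Evans et al.\ seed $R(1,2,3)$ propagated over $j\geq 2$, $d\geq 3$ by the lifting lemma --- which is precisely the intended (unwritten) argument, since the paper states the corollary as an immediate summary of the preceding results. Note that your reading of the hardness regime as $j=1$, $d\geq 2$ is the correct one and quietly repairs what appears to be a typo in the statement, which writes $d>2$ and would otherwise misassign $R(1,1,2)$ to the trivial cases despite \Cref{lem:k1d}.
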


\section{Lifting to Higher Dimensions}

We can extend a lifting result due to Tancer \cite{Tancer} to our setting. For this, the \emph{suspension} of a simplicial complex $K$ with ground set $V$ and face family $F$ is the simplicial complex $S(K)$ with ground set $V\cup\{a,b\}$ and faces $F\cup\{f\cup\{a\}\mid f\in F\}\cup\{f\cup\{b\}\mid f\in F\}$.

\begin{lemma}
Let $K$ be a simplicial complex and let $j\geq d-1$. Then $K$ is a nerve of $j$-dimensional convex sets in $\mathbb{R}^d$ if and only if $S(K)$ is a nerve of $(j+1)$-dimensional convex sets in $\mathbb{R}^{d+1}$.
\end{lemma}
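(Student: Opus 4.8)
The plan is to prove the two directions separately: the ``only if'' direction by an explicit extrusion construction, and the ``if'' direction by slicing along a separating hyperplane. Throughout I would first observe that, exactly as in the proof that $R(k,j,d)\in\ER$, we may assume all convex sets are compact (replace each set by the convex hull of one witness point per maximal intersection; this changes neither the nerve nor the dimensions). This lets me invoke \emph{strict} separation of disjoint compact convex sets later. The combinatorial feature driving everything is that in $S(K)$ the vertices $a$ and $b$ never lie in a common face, whereas for every face $f\in F$ both $f\cup\{a\}$ and $f\cup\{b\}$ are faces. Hence in any realization the sets representing $a$ and $b$ are disjoint, while each of them meets every common intersection coming from a face of $K$.

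For the forward direction, I would start from a realization of $K$ by $j$-dimensional sets $F_v$ placed inside the hyperplane $H=\{x_{d+1}=0\}\cong\R^d$ of $\R^{d+1}$, and replace each $F_v$ by the prism $\tilde F_v=F_v\times[-1,1]$. Each $\tilde F_v$ is $(j+1)$-dimensional, and since $\bigcap_{v\in\sigma}\tilde F_v=\left(\bigcap_{v\in\sigma}F_v\right)\times[-1,1]$ the sub-nerve on $V$ is still $K$. I would then realize $a$ and $b$ by ``horizontal'' sets $A$ and $B$ sitting at heights near $+1$ and $-1$ respectively, each of the form $\Lambda\times(\text{height})$, where $\Lambda$ is a large box in $H$ containing every $F_v$. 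Then $A$ meets $\bigcap_{v\in\sigma}\tilde F_v$ if and only if $\bigcap_{v\in\sigma}F_v\neq\emptyset$, i.e.\ if and only if $\sigma\in F$, symmetrically for $B$, while $A\cap B=\emptyset$ because they occupy disjoint height ranges. This reproduces $S(K)$ exactly. The delicate point is the dimension of $A$ and $B$: since $\Lambda$ is $d$-dimensional, I would take $A=\Lambda\times\{1\}$ when $j=d-1$ (so $\dim A=d=j+1$) and $A=\Lambda\times[1-\delta,1]$ for small $\delta$ when $j=d$ (so $\dim A=d+1=j+1$). This case split is precisely where the hypothesis $j\geq d-1$ enters: if $j+1<d$, no $(j+1)$-dimensional set confined to a fixed height could meet all the prisms.

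For the reverse direction, I would take a realization of $S(K)$ by $(j+1)$-dimensional sets $G_v$, $G_a$, $G_b$ in $\R^{d+1}$. Since $G_a\cap G_b=\emptyset$ and both are compact, pick a hyperplane $\pi$ \emph{strictly} separating them and set $G_v'=G_v\cap\pi$, viewing $\pi\cong\R^d$. The claim is that $\{G_v'\}$ realizes $K$. One inclusion of the nerve equivalence is immediate, as $\bigcap_{v\in\sigma}G_v'\subseteq\bigcap_{v\in\sigma}G_v$. For the converse, given a face $\sigma\in F$, I would use that $\sigma\cup\{a\}$ and $\sigma\cup\{b\}$ are faces of $S(K)$ to obtain points $p\in G_a\cap\bigcap_{v\in\sigma}G_v$ and $q\in G_b\cap\bigcap_{v\in\sigma}G_v$, which lie strictly on opposite sides of $\pi$. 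Convexity of $\bigcap_{v\in\sigma}G_v$ then forces the segment $[p,q]$, and hence the whole intersection, to cross $\pi$, giving $\bigcap_{v\in\sigma}G_v'\neq\emptyset$.

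I expect the subtlest step to be showing that each slice $G_v'$ is \emph{exactly} $j$-dimensional. Here strict separation is essential: because $G_v$ meets $G_a$ strictly above $\pi$ and $G_b$ strictly below, the linear functional defining $\pi$ is non-constant on $G_v$, so $\pi$ meets the relative interior of $G_v$; slicing a $(j+1)$-dimensional convex set through its relative interior by a hyperplane yields a convex set of dimension exactly $j$. This is the point I would justify most carefully, since weak separation alone could leave $G_v$ merely tangent to $\pi$ and collapse the slice to lower dimension; it is also the reason compactness was assumed at the outset. As $G_v'\subseteq\pi\cong\R^d$ is then $j$-dimensional with nerve $K$, the reverse direction and hence the lemma follows.
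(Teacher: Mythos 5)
Your proof is correct and follows essentially the same route as the paper: prisms $F_v\times[-1,1]$ plus two ``horizontal'' sets for $a,b$ in the forward direction, and slicing along a hyperplane separating $A$ from $B$ in the reverse direction. In fact you are more careful than the paper on two points it glosses over --- the paper realizes $a,b$ by hyperplanes, which are only $(j+1)$-dimensional when $j=d-1$ (your case split with the slab $\Lambda\times[1-\delta,1]$ fixes $j=d$), and it asserts without justification that the slices $F'\cap h$ are exactly $j$-dimensional, which as you note genuinely needs strict separation and hence the preliminary reduction to compact sets.
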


\begin{proof}
We first show that if $K$ is a nerve of convex sets in $\mathbb{R}^d$ then $S(K)$ is a nerve of convex sets in $\mathbb{R}^{d+1}$. For this, let $\mathcal{F}$ be a family of sets in $\mathbb{R}^d$ whose nerve is $K$ and embed them on the hyperplane $x_{d+1}=0$ in $\mathbb{R}^{d+1}$. For each set $F\in\mathcal{F}$ define $F'$ as the cartesian product of $F$ and the segment defined by $-2\leq x_{d+1}\leq 2$. Adding the hyperplanes $x_{d+1}=-1$ and $x_{d+1}=1$, it is easy to see that the nerve of the resulting set family is $S(K)$.

In the other direction, consider a family $\mathcal{F}'$ of $(j+1)$-dimensional convex sets in $\mathbb{R}^{d+1}$ whose nerve is $S(K)$. Let $A$ and $B$ be the convex sets that correspond to the vertices $a$ and $b$, respectively. As $a$ and $b$ are not connected in $S(K)$, the sets $A$ and $B$ must be disjoint. In particular, they can be separated by a hyperplane $h$. For each other set $F'\in\mathcal{F}'$, consider $F:=F'\cap h$ and let $\mathcal{F}$ be the family of these intersections. Note that $\mathcal{F}$ is a family of $j$-dimensional convex sets in $\mathbb{R}^d$. We claim that the nerve of $\mathcal{F}$ is $K$. Indeed, as $K$ is a subcomplex of $S(K)$, every face of $N(\mathcal{F})$ must be a face of $K$. On the other hand, for every face $f$ of $K$, there are points $p_a$ and $p_b$ in $A$ and $B$, respectively, which lie in the intersection corresponding to faces $f\cup\{a\}$ and $f\cup\{b\}$ of $S(K)$, respectively. The intersection of the segment $p_a p_b$ with $h$ lies in the intersection of the sets corresponding to $f$, showing that every face of $K$ must be a face of $N(\mathcal{F})$.
\end{proof}

Combined with the fact that the $d$-skeleton determines the entire nerve, we get the following reduction.

\begin{corollary}
Let $j\in\{d-1,d\}$. If $R(d,j,d)$ is $\exists\mathbb{R}$-hard, then so is $R(d+1,j+1,d+1)$.
\end{corollary}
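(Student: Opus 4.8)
The plan is to give a polynomial-time many-one reduction from $R(d,j,d)$ to $R(d+1,j+1,d+1)$, using the suspension lemma as the geometric engine and Helly's theorem to translate between skeleta and full nerves.

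First I would record the reformulation hinted at by the phrase preceding the statement. Since $j\le d$, the $d$-skeleton of the nerve of any family of $j$-dimensional convex sets in $\R^d$ already determines the entire nerve, either by Helly's theorem or by \Cref{thm:Helly-type}: the nerve equals the \emph{completion} $\widehat{L}$ of its $d$-skeleton $L$, where $S$ is declared a face of $\widehat{L}$ exactly when all of its $(d+1)$-subsets are faces of $L$. Hence, for an input complex $K$ to $R(d,j,d)$ I may assume $\dim K\le d$ (otherwise $K$ cannot be a $d$-skeleton and is a no-instance), and in that case $K$ is a yes-instance if and only if $\widehat{K}$ is a nerve of $j$-dimensional convex sets in $\R^d$. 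The same reasoning applies to $R(d+1,j+1,d+1)$: since $j+1\in\{d,d+1\}$, the $(d+1)$-skeleton determines the full nerve of $(j+1)$-dimensional convex sets in $\R^{d+1}$.

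The reduction then reads: on input $K$ with $\dim K\le d$, compute the faces of $\widehat{K}$ up to dimension $d+1$, form the suspension $S(\widehat{K})$, and output its $(d+1)$-skeleton $[S(\widehat{K})]_{d+1}$ by its maximal faces. As $d$ is fixed, this only requires inspecting subsets of the ground set of size at most $d+2$, so the reduction runs in polynomial time. The forward direction of correctness is immediate from the suspension lemma: if $\widehat{K}=N(\mathcal{F})$ then $S(\widehat{K})=N(\mathcal{G})$ for a family $\mathcal{G}$ of $(j+1)$-dimensional convex sets in $\R^{d+1}$, and $[S(\widehat{K})]_{d+1}$ is precisely the $(d+1)$-skeleton of this nerve, a yes-instance.

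I expect the backward direction to be the only real obstacle. Suppose $[S(\widehat{K})]_{d+1}$ is realized as the $(d+1)$-skeleton of $N(\mathcal{G})$ for some $(j+1)$-dimensional convex sets in $\R^{d+1}$. Because the $(d+1)$-skeleton determines the nerve here, $N(\mathcal{G})$ must be the dimension-$(d+1)$ completion of $[S(\widehat{K})]_{d+1}$, so it suffices to prove the combinatorial claim that this completion is exactly $S(\widehat{K})$; that is, $S(\widehat{K})$ is already the completion of its own $(d+1)$-skeleton. I would verify this by splitting a candidate face according to how it meets the two apices $a,b$: for faces contained in the original ground set and for faces containing exactly one apex, the claim reduces to the fact that $\widehat{K}$ is complete in dimension $d$ (so that all $(d+1)$-subsets being faces already forces the face, hence so does the weaker requirement on $(d+2)$-subsets); and for candidate faces containing both $a$ and $b$, one uses that $\{a,b\}$ is not a face of $S(\widehat{K})$, so no superset can be filled in. With the claim in hand, $N(\mathcal{G})=S(\widehat{K})$, and applying the suspension lemma in reverse shows $\widehat{K}$ is a nerve of $j$-dimensional convex sets in $\R^d$, whence $K$ is a yes-instance of $R(d,j,d)$. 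Since $R(d,j,d)$ is \ER-hard and \ER is closed under polynomial-time reductions, $R(d+1,j+1,d+1)$ is \ER-hard as well.
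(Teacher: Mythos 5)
Your proof is correct and follows essentially the same route the paper intends: combine the suspension lemma with the fact (via Helly's theorem) that the $d$-skeleton, respectively $(d+1)$-skeleton, determines the whole nerve. The paper leaves the details implicit, and your explicit verification that $S(\widehat{K})$ is the Helly completion of its own $(d+1)$-skeleton (with the case analysis on the apices $a,b$) is exactly the point that needs checking for the backward direction.
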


Using the \ER-hardness of $R(2,1,2)$ and $R(2,2,2)$ implied by \Cref{lem:k1d,lem:k22}, we thus deduce the following

\begin{theorem}
For any $d\geq 2$ and $k\geq d$, the problems $R(k,d-1,d)$ and $R(k,d,d)$ are $\exists\mathbb{R}$-complete.
\end{theorem}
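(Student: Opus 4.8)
Since we have already shown $R(k,j,d)\in\ER$ for all parameters, it suffices to establish \ER-hardness of $R(k,d-1,d)$ and $R(k,d,d)$ for every $d\geq 2$ and $k\geq d$. The plan is to split this into two stages: first the diagonal cases $k=d$, and then a promotion of these to arbitrary $k\geq d$.

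For the diagonal cases I would induct on $d$. The base case $d=2$ is exactly the \ER-hardness of $R(2,1,2)$ and $R(2,2,2)$, which follow from \Cref{lem:k1d} (with $k=d=2$ and $j=1$) and \Cref{lem:k22} (with $k=2$ and $j=2$). For the inductive step, assuming $R(d,d-1,d)$ and $R(d,d,d)$ are \ER-hard, I would apply the preceding corollary twice, once with $j=d-1$ and once with $j=d$, to obtain \ER-hardness of $R(d+1,d,d+1)$ and $R(d+1,d+1,d+1)$; these are precisely the two diagonal problems in dimension $d+1$. This yields \ER-hardness of $R(d,d-1,d)$ and $R(d,d,d)$ for all $d\geq 2$.

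To promote the diagonal cases to arbitrary $k\geq d$, I would give a reduction $R(d,j,d)\leq_P R(k,j,d)$ for $j\in\{d-1,d\}$ exploiting Helly's theorem: since these are convex sets in $\R^d$, the nerve is determined by its $d$-skeleton, so for $k\geq d$ the $k$-skeleton carries no additional realizability information. Concretely, given an instance $K$ of $R(d,j,d)$, I would reject immediately if $\dim K>d$, and otherwise output the complex $\Phi(K)$ obtained by Helly-completing $K$ — declaring a set of size $>d+1$ a face exactly when all of its $(d+1)$-element subsets are faces of $K$ — and then truncating to the $k$-skeleton. If $K$ is the $d$-skeleton of some $N(\mathcal{F})$, then by Helly's theorem $N(\mathcal{F})$ equals this completion, so $\Phi(K)$ is its $k$-skeleton and is a yes-instance. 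Conversely, if $\Phi(K)$ is the $k$-skeleton of some $N(\mathcal{F}')$, then taking the $d$-skeleton on both sides (legitimate because $k\geq d$) recovers $K$ as the $d$-skeleton of $N(\mathcal{F}')$, since $\Phi$ only adds faces of dimension greater than $d$. Hence $\Phi$ is a correct reduction, and combined with the diagonal cases it gives \ER-hardness for all $k\geq d$.

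The step I expect to require the most care is verifying that $\Phi$ runs in polynomial time, since the Helly-completion of $K$ can be exponentially large (for instance, when $K$ is a complete $d$-skeleton its completion is a full simplex). The point is that $k$ and $d$ are fixed constants, so the $k$-skeleton output has maximal faces of size at most $k+1$, of which there are only $O(n^{k+1})$, and membership of each can be decided by inspecting its constantly many $(d+1)$-subsets. Thus the representation by maximal faces stays polynomial. I would also make sure the degenerate branch $\dim K>d$ maps to a fixed no-instance so that the reduction is total.
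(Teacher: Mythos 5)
Your proposal is correct and follows essentially the same route as the paper: the base cases $R(2,1,2)$ and $R(2,2,2)$ from the two hardness lemmas, induction on $d$ via the suspension corollary for the diagonal cases, and Helly's theorem to pass from $k=d$ to $k\geq d$. The only difference is that you spell out the Helly-completion reduction $\Phi$ (including the polynomial bound on its output) explicitly, whereas the paper leaves this last step implicit in the remark that the $d$-skeleton determines the entire nerve.
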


This strengthens a result of Tancer who has shown that $R(d,d,d)$ is NP-hard \cite{Tancer}.

\section{Conclusion}
We have introduced a generalization of the recognition problem of intersection graphs of convex sets and have seen that several existing results in the literature of intersection graphs imply stronger statements in this setting. In particular, the computational complexities of recognizing intersections graphs of convex sets is completely settled. For small $k,j,d$, the current state of knowledge is summarized in the tables in \Cref{fig:tables}. As can be seen, for many decision problems $R(k,j,d)$, the computational complexity is still open. We conjecture that these cases are either \ER-complete or trivial, determining which of the two remains an interesting open problem. Of course, the analogous problems can be defined for objects other than convex sets, giving rise to many interesting open problems.

\begin{figure}[htbp]
    \centering
    \includegraphics[]{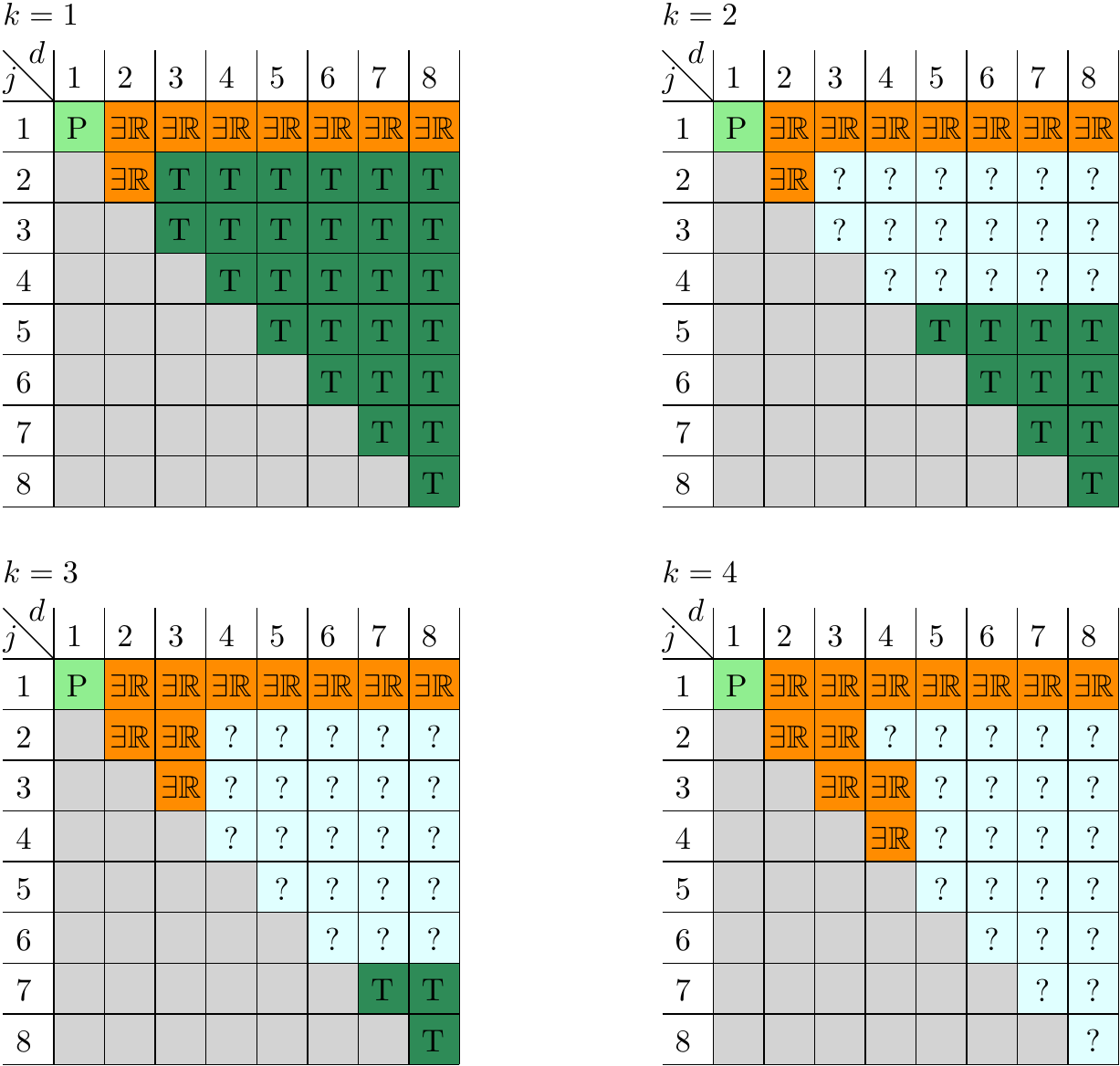}
    \caption{The complexity status of $R(k,j,d)$ for $k\leq 4$ and $d,j\leq 8$. P denotes containment in~$\mathsf{P}$, \ER denotes \ER-completeness, T denotes triviality, and ? indicates open cases.}
    \label{fig:tables}
\end{figure}

\FloatBarrier
\bibliography{literature.bib}

\end{document}